\documentclass[11pt, twoside, letterpaper]{amsart}

\usepackage{hyperref}  
\usepackage{latexsym} 
\usepackage{enumerate}
\usepackage{color, xcolor}
\usepackage{amsmath, amsthm, amsfonts, amssymb}
\usepackage{srcltx}
\usepackage[normalem]{ulem} 
 \usepackage{cancel} 
 \usepackage{geometry}
\geometry{a4paper,hcentering,vcentering,outer=2.0cm,top=2.8cm}
 
 \usepackage{subcaption}
\usepackage{tikz}
\usepackage{latexsym, amssymb, bbm}
\usepackage{amsmath}
\usepackage{mathrsfs}
\usepackage{amsthm}
\usepackage{verbatim}
\usepackage{fancyhdr}
\usepackage{tikz}
\usetikzlibrary{arrows}
\usepackage{caption}
\captionsetup[figure]{labelformat=empty}
\usepackage{forest}
\usepackage{graphicx}
\usepackage{stmaryrd}
\usepackage{wasysym}
 
 \makeatletter
\@namedef{subjclassname@2020}{%
  \textup{2020} Mathematics Subject Classification}
\makeatother



%

\theoremstyle{plain}%
\newtheorem{Theorem}{Theorem}[section] %
\newtheorem{Lemma}[Theorem]{Lemma}

\theoremstyle{definition}%
\newtheorem{Assumption}[Theorem]{Assumption}%
\newtheorem{Remark}[Theorem]{Remark} 
\newtheorem{Example}[Theorem]{Example} 
\newtheorem{Definition}[Theorem]{Definition}

%

%
 %


\DeclareMathOperator*{\argmax}{arg\:max}


\def\E{\mathbb{E}}

\def\N{\mathbb{N}}

\def\P{\mathbb{P}}

\def\R{\mathbb{R}}


\newcommand{\cA}{\mathcal{A}}

\newcommand{\cE}{\mathcal{E}}
\newcommand{\cF}{\mathcal{F}}

\newcommand{\cK}{\mathcal{K}}

\newcommand{\cX}{\mathcal{X}}
\newcommand{\cY}{\mathcal{Y}}
\newcommand{\cZ}{\mathcal{Z}}



\def\e{\varepsilon}

\newcommand{\PP}{\mathbb{P}}

\newcommand{\be}{\begin{equation}}
\newcommand{\ee}{\end{equation}}
\newcommand{\bs}{\begin{split}}
\newcommand{\es}{\end{split}}
\newcommand{\ba}{\begin{aligned}}
\newcommand{\ea}{\end{aligned}}
\renewcommand{\[}{\left[}
\renewcommand{\]}{\right]}
\renewcommand{\(}{\left(}
\renewcommand{\)}{\right)}

\renewcommand{\P}{\PP}
\newcommand{\Pas}{\text{$\P$--a.s.}}

\newcommand{\tr}[1]{\textcolor{black}{#1}}
\newcommand{\trr}[1]{\textcolor{black}{#1}}

\def\tcb{\textcolor{black}}

\renewcommand{\epsilon}{\varepsilon}
\newcommand{\nn}{\nonumber}

\renewcommand{\bar}[1]{\overline{#1}}
\begin{document}

\title[\tcb{The value of partial information}]{\tcb{The value of partial information}}

\author{Philip A. Ernst}

\address{
Philip A. Ernst, 
Department of Mathematics, Imperial College London,  London, SW7 2AZ
United Kingdom}
\email{p.ernst@imperial.ac.uk
}\thanks{\tcb{The first-named author acknowledges support from The Royal Society Wolfson Fellowship.}}

\author{Oleksii Mostovyi}\thanks{The \tcb{second-named} author has been supported by the National Science Foundation under grant  No.   DMS-1848339. 
Any opinions, findings, and conclusions or recommendations expressed in this material are those of the author and do not necessarily reflect the views of the National Science Foundation.
}
\address{Oleksii Mostovyi, Department of Mathematics, University of Connecticut,  Storrs, CT 06269, United States}%
\email{oleksii.mostovyi@uconn.edu}%

\subjclass[2020]{93E20, 91G10, 91G15, 60H30, 60H05. \textit{JEL Classification:} C61, G11, G12.}
\keywords{insider trading, the value of the signal, jump diffusion, acceptable processes,  indifference valuation.}%

\date{\today}%


\maketitle

\begin{abstract} 
We investigate a pricing rule that is applicable for streams of income or contingent claim liabilities and study how this rule changes under additional insider-type information that an investor might obtain. Considering a model where the risky asset might have jumps, we obtain an explicit form of the associated state price density \tcb{for the three different types of agents considered in \cite{Philip1}: one who has no information
about the jumps, one who knows in advance exactly when the each jump will occur, and one who
has no information about the size of the jumps but has partial information about the size of each
jump}. \tcb{For each of these agents}, we provide characterizations of the pricing rule and establish a representation formula, \trr{allowing us to quantify the value of partial information for streams of labor income \tcb{or} contingent claim liabilities}. Our work is motivated by finding and characterizing a pricing rule that, \tcb{both with or without} partial information about jumps, \trr{assigns different values of information for} different income streams or contingent claim liabilities.
\end{abstract}  
\section{Introduction}\label{Intro}
Insider trading is a very active area of financial mathematics and related fields. 
This topic has \tcb{influenced virtually every subfield} of mathematical finance, including arbitrage theory, pricing and hedging, \tcb{characterizations of equilibria}, optimal investment, etc. For references on these topics, we refer the reader to \cite{KarPik}, \cite{ElliotJeanblanc}, \cite{MartinII2}, \cite{Fabrice03}, \cite{higa}, \cite{imkellerAsym},  \cite{CampiII}, 
  \cite{ImkellerAI}, \cite{umut1}, \cite{KostasInformation}, \cite{Michael1}, \cite{HaoIT}, \cite{aksamit1},  \cite{umut2}, \cite{kosBeaFont16},  
 \cite{moniqueAksamit},  \cite{aksamit4}, \cite{Philip1}, \cite{aksamit3}, \cite{aksamit2},  \cite{PaoloLP}, \cite{Philip2}, \cite{claudioIA}, \cite{ScottAI1}, \cite{umut3}, \cite{ScottAI2}, \cite{Choi2023}, \cite{Shi2024}, and \cite{Scott2024}.

\tcb{The present paper is in part motivated by the recent work of} \cite{Philip1}. \tcb{There, an approach for modeling insider information is proposed which allows for the incorporation of jumps in the underlying stock price process,} \tcb{without sacrificing analytic tractability}. The value of the signal in \cite{Philip1} is represented via a change in the value function of the rational investor with and without insider information. \tcb{In particular, \textit{the value of the signal in \cite{Philip1} does not depend}\textit{on a stream of labor income or contingent claim liabilities.}}

\indent \tcb{The present paper fills this aforementioned void by considering the problem of pricing streams of liabilities or labor income for two types of partially informed insider traders, which is then compared with a pricing rule associated with an uninformed trader.} \tcb{As in \cite{Philip1}}, the first type of insider trader, \tcb{who we call} the insider of the ``first kind,'' knows the times of the next jump of the risky asset but does not know the size of the next jump. The second type of the inside trader, who we call an insider of the ``second kind,'' does not know the size of the next jump but has partial information about the size of the next jump; that is, he observes the signal $\eta = \xi + \varepsilon$, where $\xi$ is the jump of the cumulative return of the risky asset, and $\e$ is a random variable representing noise, \tcb{ assumed} Gaussian and independent of everything else. 
 
It is well-known that such unbounded \tcb{ (from above and below)} streams of labor income/contingent claim liabilities are \tcb{more challenging} to work with, \tcb{in particular because} 
  the notion of admissible wealth processes has to be adjusted accordingly. Following \cite{DS1997}, we incorporate the notion of acceptability, and we show that under no-arbitrage-type and a version of the super-replicability conditions, the pricing problems \tcb{(both with and without additional information about the signal) admit a unique solution, for which we identify a representation formula.}\\
\indent \tcb{The present paper's key contributions are threefold. Firstly, we identify and characterize a pricing rule which
depends on a particular stream of labor income or contingent claim liabilities and which assigns different values of partial information to different streams. Secondly, we develop a framework which allows for an infinite time horizon such that no arbitrage or other technical issues arise in the presence of extra information, which is indeed a very delicate topic in general (see \cite{kosBeaFont16} and \cite{moniqueAksamit}). Our framework allows for streams of liabilities or labor income to be unbounded from either above or below, enabling greater generality. Thirdly, we present examples of special interest \trr{that contain explicit computations for intricate streams and show the existence} of information-invariant \trr{ones}.}

The remainder of this paper is organized as follows. In Section \ref{secModel}, we describe the model and provide existence and uniqueness results for the pricing rule \tcb{developed}. In Section \ref{secKnowsNothing}, we provide an explicit formula for the state price density associated with the pricing rule in the case \tcb{where} the investor has no information about the jumps of the risky asset. Section \ref{secKnowsTimes} contains the derivation and representation of the pricing rule in the case \tcb{where} the investor knows the time of the next jump of the risky asset. In Section \ref{secKnowsSize}, we derive the corresponding state price density for the investor who does not know the times of the jumps, but has some partial information about the size of the next jump. We conclude with Section \ref{secDiscussion}, which discusses explicit forms of the pricing rules. 
 \section{Problem formulation and characterizations of the pricing rule}\label{secModel}

\tcb{We} consider an infinite horizon continuous-time problem of investment and consumption. For this, \tcb{we assume} a complete stochastic basis $\(\Omega, \cF, (\cF_t)_{t\geq 0}, \P\)$, where the filtration satisfies the usual conditions and where $\cF_0$ is trivial. We suppose that there are two traded assets:
  a risk-free bank account with a constant interest rate $r>0$ 
   and   a risky asset whose returns are given by 
 \be\label{defR}
 d{\sf R}_t = \mu d t + \sigma d W_t + \int(e^x - 1)n(dt, dx), \quad t> 0,\quad {\sf R}_0 = 0,
 \ee
 where $W$ is a standard one-dimensional Brownian motion, $\sigma>0$ is the constant-valued volatility, $\mu$ is the constant-valued drift, and $n$ is a Poisson random measure independent of $W$, such that 
 $$\E[n(dt, dx)] = \lambda dt p(x) dx,$$
 where $\lambda$ is the rate at which jumps occur and $p$ is the density of the jump of the cumulative return process. We suppose that, for some constants $m\in\R$ and $v>0$, $p(x) = \frac 1{\sqrt{2\pi v}}e^{-\frac{(x - m)^2}{2 v}}$, $x\in\R$, which is the normal density with parameters $(m, v)$. 
 We remark that the \tcb{above model dates} back to \cite{MertonJumps}. 
 The evolution of the wealth of the agent is given by 
\be\label{w0}
dw_t =(rw_t - c_t)dt + \theta_t\( (\mu - r) dt + \sigma dW_t + \int(e^x - 1)n(dt, dx)\), \quad t> 0, \quad w_0 = x,  
\ee
 where $x$ is an initial wealth,   $\theta$ is a predictable and ${\sf R}$-integrable process specifying the amount of wealth invested in the risky asset, and $c$ is a consumption rate, which we assume to be optional. An optional consumption rate $c\geq 0$ is admissible from the initial wealth $x$ if there exists a predictable and ${\sf R}$-integrable process $\theta$ such that 
the associated wealth process given \tcb{in} \eqref{w0} is nonnegative. We denote the set of admissible consumption streams from the initial wealth $x$ by $\cA(x)$, $x\geq 0$, and the set of nonnegative wealth processes $w$ given by \eqref{w0} starting from the initial wealth $x\geq 0$ and associated with $c\equiv 0$ by $\cX(x)$, $x\geq 0$.
We also fix a power utility of the form
\be\label{defU}
U(x) := \frac{x^{1 - R}}{1 - R},\quad x>0,
\ee
\tcb{with} relative risk aversion $R>0$ and $R\neq 1$.  
With these preliminaries \tcb{in hand, we can now} specify the value function
\be\label{primalProblem}
u(x) = \sup\limits_{c\in\cA(x)}\E\[\int_0^\infty e^{-\rho t}U(c_t) dt|w_0 = x\],\quad x>0,
\ee
where $\rho$ is the time discounting factor. 
 
\tcb{As noted in the introductory section, the purpose of the present paper} is to price a stream of labor income/random endowment \tcb{for the three different types of agents considered by \cite{Philip1}}:
\begin{enumerate}
\item The agent has no prior knowledge about when the jumps occur nor of their magnitudes;
\item The agent knows precisely the time of the next jump, but not the magnitude;
\item The agent knows nothing about the time of the next jump but sees the signal
$$\eta = \xi + \e,$$
where $\xi$ is the jump in the cumulative return at the next jump, and $\e$ is an independent $N(0, v_\e)$ random variable. 
\end{enumerate}
\indent \indent \tcb{We shall assume that the labor income} is an optional process $e$. If the agent has $q\in\R$ units of $e$, the evolution of his wealth is given by 
\be\label{w}
dw_t =(rw_t - c_t +  {q e_t})dt + \theta_t\( (\mu - r)dt + \sigma  dW_t + \int(e^x - 1)n(dt, dx)\), \quad t> 0, \quad w_0 = x, \quad q\in\R.
\ee
 Similarly, for a pair $(x,q)\in (0,\infty)\times \R$, we say that an optional consumption stream $c\geq 0$ is admissible from the initial wealth $x$ and the number of units of income process $q$ if there exists a predictable and ${\sf R}$-integrable process $\theta$, such that
\tcb{(i) the associated wealth process given by \eqref{w} is nonnegative at all times $\Pas$ and (ii) the self-financing wealth process} $X= w+ \int_0^\cdot (c_s -qe_s)ds$ is \tcb{``acceptable''} in the sense of \cite{DS1997}. \tcb{This means that it} can be written as a difference of two nonnegative self-financing wealth processes $X' - X''$, where $X''$ is maximal in $\cX(X''_0)$. \tcb{We remark that the notion of acceptability introduced in \cite{DS1997} has played} an important role in optimal investment problems with labor income or random endowment, see \cite{HK}, \cite{Pietro2}, \cite{MostovyiEnd}, and \cite{MostovyiSirbuUnified}.\\
\indent \tcb{Some further notation is now in order.} We denote the set of admissible consumption streams from the initial wealth $x$ and with the number of units of income process $q$ by $\cA(x,q)$, $(x,q)\in (0,\infty)\times \R$. \tcb{Note that} $\cA(x,q)$ \tcb{may} be empty for some $(x,q)\in\R^2$. Following \cite[Definition on page 157]{KarKar21}, \tcb{we recall the definition of an indifference value.}
 \begin{Definition}\label{defInd}
  Let $x>0$ be fixed. A number $p$ is an indifference value for $e$ at $x$, if 
 $$\E\[\int_0^\infty e^{-\rho t}U(c_t  ) dt|w_0 = x\] \leq u(x),\quad for~every\quad q\in\R\quad and\quad c\in\cA(x - qp, q).$$
 \end{Definition}
 \begin{Remark}\label{remPindepx}
 In view of $U$ defined in \eqref{defU} having a power form, one can see that the indifference price does not depend on $x>0$. 
 \end{Remark}
 Let $\cX(x)$ be the set of nonnegative wealth processes \tcb{of} the form \eqref{w0} associated with $c \equiv  0$.
 With $B_t = e^{rt}$, $t\geq 0$, we define the set of local martingale deflators for the discounted risky asset $\frac{\cE(R)}{B}$ by
\be\nn\ba
\cZ(y) =& \left\{ Z>0: ~ Z\frac{X}{B}~is~a~\P\text-local~martingale~ for~every~X\in\cX(1)\right\},\quad y\geq 0.
\ea\ee
\tcb{In view} of \eqref{defR}, particularly \tcb{when} $\sigma>0$, we observe that 
\be\label{NUPBR}
\cZ(1)\neq \emptyset.
\ee
\tcb{The condition in} \eqref{NUPBR} is equivalent to the no unbounded profit with bounded risk (NUPBR) condition introduced in \cite{KarKar07} (via the results in \cite{TakaokaSchweizer} and \cite{KabKarSong16}).  
\begin{Remark}
Without the condition in \eqref{NUPBR}, the \tcb{optimization problem in} \eqref{primalProblem} is not well-posed. In the context of optimal investment from terminal wealth, illuminating examples are given in 
 \cite[Proposition 4.19]{KarKar07}. Furthermore, we need to suppose that \eqref{NUPBR} holds in the \tcb{present paper}. An example of the model where \eqref{NUPBR} fails is given by 
 the riskless asset in \eqref{defR} as above and the risky asset having the return
 $$rt + N_t, \quad t\geq 0,$$
 where $N$ is a Poisson process with intensity $1$. 
\end{Remark}

\tcb{We proceed to denote} the set of  state price density processes 
 $$\cY(y) = \left\{ Y = \frac{Z}{B}:~Z\in\cZ(y)\right\},\quad y\geq 0,$$
and consider an income stream $e = (e_t)_{t\geq 0}$ such that 
\be\label{superRep}
\sup\limits_{Y\in \cY(1)} \E\[ \int_0^\infty Y_t |e_t| dt\]<\infty.
\ee
\begin{Remark}
One can see that \eqref{superRep} holds if there exist constants $C>0$ and $r'\in [0, r)$, such that
$$|e_t(\omega)|\leq Ce^{r't} , \quad (dt\times\P)\text-a.e. $$
\end{Remark}
For $q := \frac{1 - R}{R}$, we introduce $V(y) := \sup\limits_{x>0}\(U(x) - xy\) = \frac{y^{q}}{-q}$, $y>0$, and set the dual value function as
\be\label{dualProblem}
v(y) := \inf\limits_{Y\in \cY(y)}\E\[\int_0^\infty e^{-\frac \rho Rt}V(Y_t)dt\],\quad y>0.
\ee
Now, suppose that 
\be\label{finValue}
u(z)>-\infty \quad and \quad v(z)<\infty,\quad z>0.
\ee
\begin{Remark}
If $R\in(0,1)$, $U$ is positive-valued and therefore $u(x)>-\infty$ \tcb{for all} $x>0$.
If $R>1$, a sufficient condition for $u(x)>-\infty$, $x>0$, is
\be\label{7291}
\rho \geq  (1 -R)r.
\ee
\tcb{This result follows from} \cite[Lemma 4.2]{MostovyiNec}. \tcb{Namely,} for every constant $\delta\in(0, r)$, $\bar c_t := \delta e^{(r-\delta) t}$, $t\geq 0$, is an element of $\cA(1)$, and  
 $$\E\[\int_0^\infty e^{-\rho t} U(\bar c_t) dt\] = \frac{\delta^{1-R}}{1-R}\int_0^\infty e^{-\rho t} e^{(1-R)(r - \delta)t} dt>-\infty,$$ 
 where the inequality follows from \eqref{7291}. Such a $\bar c$ \tcb{will give} a finite lower bound for $u$. 
 
\tcb{The condition $v(z)<\infty,$ $ z>0$, when $R>1$ holds under \eqref{NUPBR}} \tcb{since the objective in \eqref{dualProblem} is negative-valued}. If $R\in(0,1)$, a necessary and sufficient condition for $v(z)<\infty,$ $ z>0$, is the existence of one-state price density $Y\in\cY(1)$ such that 
 $$\E\[\int_0^\infty e^{-\frac \rho Rt}V(Y_t)dt\]<\infty.$$
\end{Remark}
\tcb{To obtain uniqueness} of the indifference price, we shall need \tcb{Assumption \ref{asUnique} below.}
\begin{Assumption}\label{asUnique}
Suppose there exists an optional process $\tilde c$, such that 
$$\tilde c\geq |e|,\quad (dt\times \P)\text-a.e.,$$
and, for the the minimizer to \eqref{dualProblem} at $y=1$,  {$\hat Y\in\cY(1)$}. \tcb{Further,}
\be\label{indUnique}
\E\[ \int_0^\infty \hat Y_t \tilde c_t dt\] = \sup\limits_{Y\in \cY(1)} \E\[ \int_0^\infty Y_t\tilde c_t  dt\]<\infty.
\ee
\end{Assumption}
\begin{Remark}
 Condition  $\hat Y\in\cY(1)$ is included in Assumption \ref{asUnique}. It will be used in \eqref{7293} below.  
\end{Remark}
\begin{Remark}\label{remYmart}[On the sufficient conditions for Assumption \ref{asUnique}]
For the uniqueness of the indifference price represented by \eqref{iRep}, Assumption \ref{asUnique} holds if 
\begin{enumerate}[(a)]\item 
there exists a {\it deterministic} consumption stream $\tilde c_t$, $t\geq 0$, such that 
$$\tilde c_t \geq |e_t|,\quad (dt\times \P)\text-a.e.,$$
\item 
which satisfies $$\int_0^\infty \frac{\tilde c_t }{B_t }dt <\infty,$$
\item 
and the minimizer to  \eqref{dualProblem} at $y=1$, $\hat Y$, is  such that $\hat YB = (\hat Y_t B_t)_{t\geq 0}$ is a $\P$-martingale.
\end{enumerate}
Then, under the remaining assumptions of Theorem \ref{thmIRep}, the proof goes through, where the key step \tcb{follows} via Tonelli's theorem, 
\be\nn\ba
\E\[\int_0^\infty\tilde c_t\hat  Y_tdt\] = \int_0^\infty \tilde c_t \E [\hat Y_t ] dt  = \int_0^\infty \frac{\tilde  c_t}{B_t}  dt <\infty.
\ea\ee
The inequality  holds by condition ${\rm (b)}$, 
and for every $Y\in\cY(1)$, we have
\be\nn\ba
\E\[\int_0^\infty\tilde c_t   Y_tdt\] = \int_0^\infty \tilde c_t \E [  Y_t ] dt  \leq \int_0^\infty \frac{\tilde  c_t}{B_t}  dt = \E\[\int_0^\infty\tilde c_t\hat  Y_tdt\].
\ea\ee
\end{Remark}

\tcb{We now proceed to state Theorem \ref{thmIRep}. Its proof shall follow the proof of Lemma \ref{lem961}.}
\begin{Theorem}\label{thmIRep}
Let us consider the market given as above, where there is a riskless security with a constant interest rate and a risky \tcb{asset}, whose return $R$ is given by \eqref{defR} with  $\sigma>0$\footnote{This ensures that there is no arbitrage.}. Suppose that \eqref{superRep}  and \eqref{finValue} hold. 
 Then, for every $x>0$, the indifference value $p(x)$ of the income stream $e$ exists and \tcb{does not depend on $x>0$. Further,}
\be\label{iRep}
p(x) = \E\[ \int_0^\infty {\hat Y_t} e_t dt\],\quad x>0,
\ee
is an indifference price of $e$, where $\hat Y$ is the minimizer to \eqref{dualProblem} at $y=1$. If Assumption \ref{asUnique} \tcb{also} holds, then the indifference price of $e$ is unique and is given by \eqref{iRep}. 
\end{Theorem}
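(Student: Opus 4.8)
The plan is to treat \eqref{primalProblem} and \eqref{dualProblem} as a conjugate pair within the convex-duality theory for utility maximization with intermediate consumption over an infinite horizon (in the spirit of \cite{MostovyiNec}, \cite{MostovyiEnd}, \cite{MostovyiSirbuUnified}) and to combine it with the acceptability framework of \cite{DS1997}. Since $\sigma>0$, \eqref{NUPBR} holds, so $\cZ(1)\neq\emptyset$; together with \eqref{finValue} this supplies, for every $x>0$: the biconjugacy $u(x)=\inf_{y>0}\bigl(v(y)+xy\bigr)$; the dual optimizer $\hat Y$ of \eqref{dualProblem} at $y=1$, so that $v(1)=\E[\int_0^\infty e^{-\frac{\rho}{R}t}V(\hat Y_t)\,dt]$; and the primal optimizer $\hat c^{(x)}\in\cA(x)$ with $\E[\int_0^\infty \hat Y_t\hat c^{(x)}_t\,dt]=x$. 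Note $\hat Y_0=1$ since $B_0=1$. Put $p:=\E[\int_0^\infty \hat Y_t e_t\,dt]$, which is finite because $|p|\le\sup_{Y\in\cY(1)}\E[\int_0^\infty Y_t|e_t|\,dt]<\infty$ by \eqref{superRep}, and which, by Remark~\ref{remPindepx}, is a legitimate candidate indifference value simultaneously for all $x>0$.

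\emph{Existence.} Fix $x>0$, $q\in\R$, $c\in\cA(x-qp,q)$ with an associated strategy $\theta$ and nonnegative wealth $w$ from \eqref{w} with $w_0=x-qp$. The analytic heart — which I would isolate as the preliminary Lemma~\ref{lem961} — is that acceptability of the associated self-financing process makes the deflated budget object $M_t:=\hat Y_tw_t+\int_0^t \hat Y_s(c_s-qe_s)\,ds$ a $\P$-supermartingale: the decomposition $X=X'-X''$ with $X''$ maximal upgrades the local-(super)martingale property of the $\hat Y$-deflation to a genuine supermartingale, and $\int_0^\cdot \hat Y_s|qe_s|\,ds\le|q|\int_0^\infty \hat Y_s|e_s|\,ds$ is $\P$-integrable by \eqref{superRep}, so $M$ is bounded below by an integrable random variable, which is all one needs. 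Then $\E[M_t]\le M_0=\hat Y_0w_0=x-qp$; since $\hat Y_tw_t\ge0$, monotone convergence on the $c$-term and dominated convergence on the $e$-term give, as $t\uparrow\infty$, the budget inequality $\E[\int_0^\infty \hat Y_t c_t\,dt]\le (x-qp)+qp=x$. Finally, $\cA(x)$ is contained in the single-deflator budget set $\cC(x):=\{\text{optional }\tilde c\ge0:\E[\int_0^\infty \hat Y_t\tilde c_t\,dt]\le x\}$, and applying $U(a)\le V(b)+ab$ pointwise with $b_t$ chosen so that $e^{-\rho t}b_t=\nu\hat Y_t$ (so the linear term is $\nu\E[\int_0^\infty \hat Y_t\tilde c_t\,dt]$ and, by homogeneity of $V$ and dual optimality of $\hat Y$, the dual term is $v(\nu)$), together with biconjugacy, yields $\sup_{\tilde c\in\cC(x)}\E[\int_0^\infty e^{-\rho t}U(\tilde c_t)\,dt]\le\inf_{\nu>0}\bigl(v(\nu)+\nu x\bigr)=u(x)$. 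Hence $\E[\int_0^\infty e^{-\rho t}U(c_t)\,dt]\le u(x)$ for all admissible $(q,c)$, i.e. $p$ is an indifference value, \eqref{iRep} holds, and its $x$-independence was recorded in Remark~\ref{remPindepx}.

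\emph{Uniqueness under Assumption~\ref{asUnique}.} Let $p'$ be an indifference value with $p'\neq p$; I argue by a first-order perturbation in the number of units $q$ near $0$. From the single-deflator relaxation applied at capital $x'$ with $q$ units (whose $\hat Y$-budget is $x'+q\,\E[\int_0^\infty \hat Y_t e_t\,dt]=x'+qp$), every $c\in\cA(x',q)$ obeys the upper bound $\E[\int_0^\infty e^{-\rho t}U(c_t)\,dt]\le u(x'+qp)$. For the matching lower bound near $q=0$, Assumption~\ref{asUnique} enters: since $\tilde c\ge|e|$ and, by \eqref{indUnique}, $\tilde c$ is super-replicable at cost $\E[\int_0^\infty \hat Y_t\tilde c_t\,dt]=\sup_{Y\in\cY(1)}\E[\int_0^\infty Y_t\tilde c_t\,dt]<\infty$, one can, for $|q|$ small, net the $q$ units of $e$ against $|q|$ copies of the super-replicating strategy for $\tilde c$ and build from $x'$ a consumption plan that is admissible and \emph{acceptable} (the combined self-financing part written as a difference of two nonnegative self-financing processes with maximal subtrahend) whose objective value is at least $u(x'+qp)-o(q)$. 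Taking $x'=x-qp'$ and using that $p'$ being an indifference value forces every $c\in\cA(x-qp',q)$ to have objective value $\le u(x)$, we obtain $u\bigl(x+q(p-p')\bigr)-o(q)\le u(x)$; dividing by $|q|$ and letting $q\to0^+$, then $q\to0^-$ (with $u'(x)>0$), forces $p-p'=0$, a contradiction. Hence $p'=p$, and the indifference price is unique and given by \eqref{iRep}.

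The step I expect to be the main obstacle is the lower-bound construction in the uniqueness part: converting the abstract super-replicability of $\tilde c$ at its $\hat Y$-price into an \emph{acceptable} element of $\cA(x-qp',q)$ whose objective value tracks $u(x+q(p-p'))$ to first order in $q$. This is precisely the regime for which the acceptability notion of \cite{DS1997} and the hypotheses bundled into Assumption~\ref{asUnique} (notably $\hat Y\in\cY(1)$, used around \eqref{7293}, and the attainment in \eqref{indUnique}) are designed. Once Lemma~\ref{lem961} is available, the existence part is comparatively routine, resting only on the supermartingale property of deflated acceptable wealth and the standard duality identities for \eqref{primalProblem}--\eqref{dualProblem}.
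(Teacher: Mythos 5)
Your overall architecture matches the paper's: a budget-constraint lemma for the dual optimizer $\hat Y$, followed by the Fenchel inequality and the conjugacy $u$--$v$ for existence, and a first-order perturbation in the number of units $q$, powered by the dominating stream $\tilde c$ of Assumption \ref{asUnique}, for uniqueness. Where you genuinely diverge is in the proof of the budget inequality (the paper's Lemma \ref{lem961}): you obtain the supermartingale property of $M_t=\hat Y_t w_t+\int_0^t\hat Y_s(c_s-qe_s)\,ds$ directly from the local-martingale property of the $\hat Y$-deflated acceptable self-financing part together with the integrable lower bound $-|q|\int_0^\infty\hat Y_s|e_s|\,ds$ and conditional Fatou, then let $t\uparrow\infty$. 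This is a legitimate and arguably cleaner route than the paper's, which instead constructs an explicit nonnegative wealth process super-replicating $\int_0^\cdot|e_s|\,ds$ (via \cite{MostovyiNUPBR}), changes num\'eraire, integrates by parts, and passes through a localizing sequence of stopping times \eqref{964}--\eqref{969}. Two small corrections to your justification: what upgrades the local martingale to a supermartingale is the integrable lower bound (which you also invoke), not the maximality of $X''$ in the acceptability decomposition --- maximality is used in the paper only in Lemma \ref{lemK} to show $\cA(x,q)\neq\emptyset$; and, exactly as in the paper's own Lemma \ref{lem961}, your argument needs $\hat Y\in\cY(1)$ (so that $\hat Y B$ is a local martingale deflator and \eqref{superRep} applies to $\hat Y$), which formally enters through Assumption \ref{asUnique} or through the explicit computations of Sections \ref{secKnowsNothing}--\ref{secKnowsSize}.

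The one substantive piece you have not supplied is precisely the step you flag as the main obstacle: the lower bound ``value at least $u(x'+qp)-o(q)$'' for an acceptable plan in $\cA(x-qp',q)$. Asserting that Assumption \ref{asUnique} is ``designed for'' this does not yet produce the estimate. The paper's execution is concrete and you would need its analogue: for $q_n\downarrow 0$ set $c^n:=\hat c(x-q_n(\tilde x+\pi))+q_n(\tilde c+e)$, which lies in $\cA(x-q_n\pi,q_n)$ because $\tilde c+e\ge 0$ and, by the attainment in \eqref{indUnique} together with \cite[Lemma 4.2]{MostovyiNec}, $\tilde c\in\cA(\tilde x)$ with $\tilde x=\E[\int_0^\infty\hat Y_t\tilde c_t\,dt]$ (one adds $q_n$ copies of $\tilde c$ rather than ``netting'' against a hedge); then concavity of $U$, Fatou's lemma applied to $(\tilde c+e)U'(c^n)$ using the scaling $\hat c(x)=x\hat c(1)$, and the first-order relation $e^{-\rho t}U'(\hat c_t(x))=\hat Y_t$ from \cite[Theorem 3.2]{MostovyiNec} give $\liminf_n q_n^{-1}\bigl(\E[\int_0^\infty e^{-\rho t}U(c^n_t)\,dt]-u(x)\bigr)\ge p-\pi>0$, which is the quantitative content behind your $o(q)$ claim; the case $\pi>p$ is handled by repeating the argument with $-e$ (possible since $\tilde c\ge|e|$). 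Until this expansion is carried out, your uniqueness part is an outline of the paper's argument rather than a proof; the existence part, by contrast, is complete as you present it.
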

\begin{Remark} Theorem \ref{thmIRep} can be extended to the case when $\cF_0$ is not trivial but is a sigma-algebra generated by a random variable such that the conditional independence of increments denoted in \cite[Section II.6]{JS} holds. 
Then, the indifference value is an $\cF_0$-measurable random variable that is given via a conditional expectation with respect to $\cF_0$, that is
\be\label{iRepCond}
p= \E\[ \int_0^\infty {\hat Y_t} e_t dt|\cF_0\]. 
\ee
As in the unconditional case, $p$ does not depend on the initial wealth, which, in this case, is a nonnegative $\cF_0$-measurable random variable. $\hat Y$ is the dual optimizer with the initial value $1$, that is, starting from the dual problem, where \tcb{the initial values can be the nonnegative $\cF_0$-measurable random variable $Y_0$}. \tcb{Adopting} the standard notation $\frac 00:=0$, and denoting the optimizer by $Y$, we can represent $p$ in \eqref{iRepCond} via $\hat Y = \frac{Y}{Y_0}$.
\end{Remark}
\tcb{The proof of Theorem \ref{thmIRep} requires Lemma \ref{lemK} and Lemma \ref{lem961} below. We proceed with Lemma \ref{lemK}.}
\begin{Lemma}\label{lemK}
Under the conditions of Theorem \ref{thmIRep}, the set 
\be\label{defK}
\cK:=\{(x,q)\in\R^2:~\cA(x,q)\neq \emptyset\}
\ee
 is a nonempty convex cone in $\R^2$ such that 
 \be\label{Kint}
 \{(x,0):~x> 0\}\in int\cK.
 \ee
\end{Lemma}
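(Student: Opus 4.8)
The plan is to verify the three claimed properties of $\cK$ in turn: nonemptiness, the cone property, and convexity, and then the interior statement \eqref{Kint}.

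\textbf{Nonemptiness and the cone property.} First I would observe that $\cA(x,0) = \cA(x)$, and that $\cA(x)$ is nonempty for every $x > 0$ — indeed the constantly zero consumption $c\equiv 0$ together with $\theta \equiv 0$ yields the nonnegative wealth process $w_t = xe^{rt}$, which is trivially acceptable (it is a nonnegative self-financing wealth process, so it is its own decomposition $X' - X''$ with $X'' \equiv 0$, which is maximal in $\cX(0)$). Hence $(x,0)\in\cK$ for all $x > 0$, so $\cK\neq\emptyset$. For the cone property, fix $(x,q)\in\cK$ and $\alpha > 0$; if $c\in\cA(x,q)$ with associated $\theta$, then by linearity of the wealth dynamics \eqref{w} in $(x,\theta,c,q)$, the scaled triple $(\alpha c, \alpha\theta)$ drives the wealth process $\alpha w$ from initial wealth $\alpha x$ with $\alpha q$ units of income; $\alpha w \geq 0$ since $\alpha > 0$ and $w\geq 0$, and the self-financing process $\alpha X$ is acceptable because if $X = X' - X''$ with $X''$ maximal in $\cX(X_0'')$ then $\alpha X = \alpha X' - \alpha X''$ and $\alpha X''$ is maximal in $\cX(\alpha X_0'')$ (maximality is scale-invariant). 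So $\alpha c\in\cA(\alpha x,\alpha q)$ and $(\alpha x,\alpha q)\in\cK$.

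\textbf{Convexity.} Take $(x_1,q_1),(x_2,q_2)\in\cK$ and $\lambda\in[0,1]$; pick $c_i\in\cA(x_i,q_i)$ with associated portfolios $\theta_i$ and wealth processes $w_i\geq 0$ and acceptable self-financing processes $X_i = X_i' - X_i''$. Again by linearity of \eqref{w}, the convex combination $\lambda c_1 + (1-\lambda)c_2$ with portfolio $\lambda\theta_1 + (1-\lambda)\theta_2$ produces wealth $\lambda w_1 + (1-\lambda)w_2 \geq 0$ from initial data $(\lambda x_1 + (1-\lambda)x_2, \lambda q_1 + (1-\lambda)q_2)$. For acceptability of the resulting self-financing process $X = \lambda X_1 + (1-\lambda)X_2$, write $X = (\lambda X_1' + (1-\lambda)X_2') - (\lambda X_1'' + (1-\lambda)X_2'')$; the subtracted process is a sum of nonnegative self-financing processes and hence nonnegative self-financing, and here I would invoke the fact — essentially the content of the relevant part of \cite{DS1997} — that a sum of maximal elements is maximal (equivalently, that the optional decomposition / no-arbitrage argument shows $\lambda X_1'' + (1-\lambda)X_2''$ admits no strictly dominating element in its class). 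This gives $\lambda c_1 + (1-\lambda)c_2\in\cA(\lambda x_1 + (1-\lambda)x_2,\lambda q_1+(1-\lambda)q_2)$, so $\cK$ is convex.

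\textbf{The interior statement.} It remains to show $\{(x,0):x>0\}\subset\operatorname{int}\cK$. Fix $x_0 > 0$; I want a whole two-dimensional neighborhood of $(x_0,0)$ inside $\cK$. The key input is \eqref{superRep}: since $\sup_{Y\in\cY(1)}\E\big[\int_0^\infty Y_t|e_t|\,dt\big] =: K < \infty$, the stream $|e|$ is super-replicable — by the optional decomposition / superhedging duality there is a nonnegative self-financing wealth process starting from $K$ (in fact from any capital $\geq K$) that dominates the cumulative cost $\int_0^\cdot |e_s|\,ds$ of delivering $|e|$; equivalently, for any $q\in\R$ the endowment $qe$ can be financed starting from initial capital $|q|K$ while keeping an acceptable, nonnegative wealth process (running, say, zero consumption). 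Concretely, given $(x,q)$ with $|q|\le 1$ and $x \geq x_0/2$, choosing $x \geq x_0/2 \geq K$ would force $x_0 \geq 2K$, which need not hold; so instead I take the neighborhood $\{(x,q): |x - x_0| < \delta,\ |q| < \delta\}$ with $\delta$ small enough that $x > x_0 - \delta > 0$ and $\delta K < x_0 - \delta$, i.e. $\delta < x_0/(1+K)$. For such $(x,q)$, split the initial wealth as a part $|q|K < x$ used to superhedge $-qe$ with an acceptable nonnegative wealth process and zero consumption, and the remaining positive part $x - |q|K > 0$ used as in the nonemptiness step; superposing gives an acceptable nonnegative wealth process with $c\equiv 0 \in\cA(x,q)$. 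Hence the whole $\delta$-box lies in $\cK$, proving \eqref{Kint}.

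The main obstacle I anticipate is the acceptability bookkeeping: verifying that sums and nonnegative scalar multiples of acceptable processes are acceptable (which needs that maximality in $\cX$ is preserved under these operations), and making precise the superreplication step — turning the finiteness of $\sup_{Y}\E[\int Y_t|e_t|dt]$ into an honest acceptable wealth process that dominates the running cost of the endowment. Both are by now standard consequences of the theory in \cite{DS1997} (optional decomposition, maximal elements, superhedging duality), but they must be cited and applied carefully rather than re-proved here.
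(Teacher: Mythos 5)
Your argument follows essentially the same route as the paper: the convex-cone part is treated as a direct consequence of the linearity of the wealth dynamics (the paper simply says it follows from the definitions, while you spell out the scaling and the sum-of-maximal-elements fact from \cite{DS1997}, which in this non-locally-bounded setting must be invoked in its separating-measure form, as the paper notes via \cite{HKS05}), and the interior statement is obtained, exactly as in the paper, by superreplicating $|e|$ from the capital $\tilde x=\sup_{Y\in\cY(1)}\E\left[\int_0^\infty Y_t|e_t|\,dt\right]$ furnished by \eqref{superRep} and superposing that hedge with a pure cash position, your $\delta$-box playing the role of the paper's wedge $\{(\alpha\tilde x,\alpha q):\alpha\ge 0,\ |q|\le 1\}$. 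The only point to tighten is that the superhedging domination must be stated in discounted terms, $\tilde X_t/B_t\ \ge\ \int_0^t |e_s|B_s^{-1}\,ds$ (as in \eqref{972}), since this, not the undiscounted bound you wrote, is what makes the wealth process with running income $qe$ nonnegative.
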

\begin{proof}
The fact that $\cK$ is a convex cone follows from its definition and the definition of the sets $\cA(x,q)$. In view of \eqref{superRep}, using localization and \tcb{employing the construction of} \cite[proof of Lemma 4.2]{MostovyiNec} with $\tilde x: = \sup\limits_{Y\in\cY(1)}\E\[\int_0^\infty Y_t |e_t|dt \]$, one can show that there exists 
$X\in\cX(\tilde x)$ such that 
\eqref{w} holds for $c\equiv 0$, every $q\in[-1,1]$, and $$w = X - q\int_0^\cdot e_tdt.$$ 
Next, we observe that by \cite[Remark 3.4]{HKS05} \tcb{that} the assertions of \cite{DS1997} (\tcb{in particular}, \cite[Corollary 2.6]{DS1997}) hold without the \tcb{assumption of local boundedness of the risky asset. Instead}, it is enough to suppose that \tcb{there exists a ``separating measure'' (\cite{Delbaen-Schachermayer1998})} for the risky asset. This assumption holds in the present \tcb{setting}.
Thus, from \cite[Corollary 2.6]{DS1997}, we deduce that $X$ constructed above can be chosen to be maximal. Consequently, $\cK\neq \emptyset$ and $(\tilde x, q)\in\cK$, $q\in[-1,1]$. Moreover, \eqref{w} implies that, for every $\alpha\geq 0$, we have
$$(\alpha\tilde x, \alpha q)\in\cK, \quad q\in[-1,1],$$
which \tcb{in turn} implies \eqref{Kint}.  This completes the proof of the lemma.

\end{proof}
\begin{Lemma}\label{lem961}
Under the conditions of Theorem \ref{thmIRep}, let $(x,q)\in\R^2$ be such that $\cA(x,q)\neq \emptyset$. Then, for the income stream $e$ and \tcb{for} every consumption process $c\in\cA(x, q)$, 
\be\label{961}
 \E\[ \int_0^\infty  \hat Y_t  c_t  dt\]\leq x +  q\E\[ \int_0^\infty  \hat Y_t  e_t  dt\],
\ee
  where $\hat Y$ is the minimizer to \eqref{dualProblem} at $y=1$.
\end{Lemma}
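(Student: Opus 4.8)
The plan is to establish the budget-constraint inequality \eqref{961} by a standard superreplication-type argument, combining the acceptability structure from Lemma \ref{lemK} with a supermartingale property for products of state price densities and self-financing wealth processes. First I would fix $(x,q)\in\cK$ and $c\in\cA(x,q)$, and let $\theta$ be an admissible portfolio so that the wealth process $w$ in \eqref{w} is nonnegative and the self-financing process $X = w + \int_0^\cdot(c_s - qe_s)ds$ is acceptable, i.e. $X = X' - X''$ with $X',X''$ nonnegative self-financing and $X''$ maximal in $\cX(X''_0)$. The goal is to show $\E\big[\int_0^\infty \hat Y_t c_t\,dt\big] \le x + q\,\E\big[\int_0^\infty \hat Y_t e_t\,dt\big]$.

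The key step is a product-supermartingale argument. For $\hat Y = \hat Z/B \in \cY(1)$ with $\hat Z \in \cZ(1)$, the discounted self-financing wealth $X/B$ deflated by $\hat Z$ behaves well: by definition of $\cZ(1)$, $\hat Z \tilde X/B$ is a $\P$-local martingale for $\tilde X \in \cX(1)$, hence a supermartingale when nonnegative, and the maximality of $X''$ in $\cX(X''_0)$ guarantees $\hat Z X''/B$ is a genuine $\P$-martingale (this is precisely where Assumption \eqref{indUnique}-type control and the definition of maximal processes enter — maximal elements are exactly those whose deflated values are true martingales for the dual optimizer, cf. \cite{DS1997}). Now $X/B + \int_0^\cdot \frac{c_s - qe_s}{B_s}ds$ is, up to the deflator, the accumulated self-financing value. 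I would verify that
\be\nn
\hat Y_t X_t + \int_0^t \hat Y_s(c_s - q e_s)\,ds, \quad t\ge 0,
\ee
is a $\P$-supermartingale: its local-martingale part comes from $\hat Z X/B$ being a local martingale (writing $X = X' - X''$, the $X'$ piece gives a nonnegative local martingale hence supermartingale, and the $X''$ piece gives a true martingale), and the consumption integral is handled by the self-financing dynamics. Taking expectations at $t=0$ (where $\hat Y_0 = 1$ and $X_0 = x$) and letting $t\to\infty$ via Fatou on the nonnegative pieces, one obtains
\be\nn
\E\[\int_0^\infty \hat Y_s c_s\,ds\] \le x + q\,\E\[\int_0^\infty \hat Y_s e_s\,ds\].
\ee
Here one uses \eqref{superRep} to ensure $\E\big[\int_0^\infty \hat Y_s |e_s|\,ds\big]<\infty$ so the right-hand side is well-defined and the $qe_s$ term can be separated, and $c\ge 0$ so that $\int_0^t \hat Y_s c_s\,ds$ is monotone in $t$.

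The main obstacle I anticipate is the passage to the infinite horizon: making the supermartingale property rigorous at $t=\infty$ and justifying that the limit of $\hat Y_t X_t$ (and of $\hat Y_t X''_t$) contributes nonnegatively — or vanishes in the relevant sense — rather than leaking mass. This is exactly what the acceptability/maximality decomposition is designed to handle: maximality of $X''$ forces $\E[\hat Y_\infty X''_\infty]$ (interpreted as the $L^1$-limit) to match $\hat Y_0 X''_0$, so no value escapes through $X''$, while the nonnegative supermartingale $\hat Y X'$ only loses mass, which preserves the inequality direction. A secondary technical point is interchanging expectation and the time integral for the $e$-term, which follows from \eqref{superRep} and Fubini–Tonelli. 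The finiteness hypotheses \eqref{finValue} are not needed for this lemma itself but ensure the ambient problem is well-posed; the only inputs genuinely used are \eqref{NUPBR} (so $\cY(1)\ne\emptyset$ and $\hat Y$ exists), \eqref{superRep}, and the structural results of \cite{DS1997} invoked in Lemma \ref{lemK}.
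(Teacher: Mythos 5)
Your strategy has a genuine gap at its central step. You assert that maximality of $X''$ implies that $\hat Z X''/B$ (equivalently $\hat Y X''$) is a true $\P$-martingale under the dual optimizer, citing \cite{DS1997}. That is not what maximality gives: a maximal element of $\cX(X''_0)$ is a uniformly integrable martingale under \emph{some} equivalent separating (local martingale) measure, not under every local martingale deflator, and in particular not necessarily under $\hat Y$; the dual minimizer need not be (the density process of) any such measure. Without this true-martingale property your supermartingale/Fatou argument cannot close: since $X''$ enters the decomposition $X=X'-X''$ with a negative sign, the property you do have --- that the nonnegative local martingale $\hat Y X''$ is a supermartingale, so $\E[\hat Y_t X''_t]\leq X''_0$ --- produces an inequality in the wrong direction, and ``no value escapes through $X''$'' is precisely the unproven claim. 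Two secondary slips: the candidate process $\hat Y_t X_t+\int_0^t\hat Y_s(c_s-qe_s)ds$ with $X=w+\int_0^\cdot(c_s-qe_s)ds$ double counts the cumulative consumption/income term (the natural local martingale involves $w$, not $X$); and the existence of $\hat Y$ and its membership in $\cY(1)$ do not follow from \eqref{NUPBR} alone --- existence uses \eqref{finValue}, and $\hat Y\in\cY(1)$ is exactly the part of Assumption \ref{asUnique} that the paper's own proof invokes at this point.

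The paper's proof avoids the maximality-under-$\hat Y$ issue altogether, and this is the idea your proposal is missing. From \eqref{superRep} it constructs a nonnegative self-financing process $\tilde X\in\cX(\tilde x)$, with $\tilde x=\sup_{Y\in\cY(1)}\E\big[\int_0^\infty Y_t|e_t|dt\big]$, whose discounted value dominates the discounted cumulative $|e|$; then the acceptable self-financing process $\bar X$ dominating $\bar C-q\bar\cE$ is rewritten as a difference of two \emph{nonnegative} wealth processes, $\bar X=(\bar X+|q|\tilde X)-|q|\tilde X$ with $\bar X+|q|\tilde X\in\cX(x+|q|\tilde x)$, whose products with $\hat Y$ are local martingales. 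Taking one common localizing sequence $(\tau_n)$, optional stopping gives the exact identity $\E[\bar X_{\tau_n}\hat Y_{\tau_n}]=x$ for every $n$ --- no true-martingale property and no limit of the wealth term are ever needed --- while integration by parts converts the pathwise domination \eqref{971} into \eqref{969}; the passage $n\to\infty$ is performed only in the integrals, by monotone convergence for the $c$-term and dominated convergence (via \eqref{superRep}) for the $e$-term, yielding \eqref{961}. To repair your argument you would either have to prove the true-martingale property of $\hat Y X''$ for the specific maximal process arising from acceptability (not available in general) or replace the $X'-X''$ decomposition by the explicit one built from the super-replication of $|e|$, which is what the paper does.
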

\begin{Remark}
Under the continuity of the stock price process assumption, a closely related result is contained in \cite[Exercise 3.55\tr{, page 156}]{KarKar21}. 
\end{Remark}
\begin{proof}[Proof of Lemma \ref{lem961}]
For a given $(x,q)\in\cK$, where $\cK$ is defined in \eqref{defK}, \tcb{and for} $\cA(x,q)\neq \emptyset$, let us fix an arbitrary $c\in\cA(x,q)$. Let $X$ be a self-financing acceptable wealth process starting from $x$ such that 
\be\label{962}
X_t + q\int_0^t e_s ds \geq \int_0^tc_sds,\quad t\geq 0,\quad \P\text-a.s.
\ee
Let $$\tilde x: = \sup\limits_{Y\in\cY(1)}\E\[\int_0^\infty Y_t |e_t|dt \] =  \sup\limits_{Z\in\cZ(1)}\E\[\int_0^\infty Z_t \frac{|e_t|}{B_t}dt \] .$$
Then, from \eqref{superRep}, using
 \cite[Lemma 1]{MostovyiNUPBR}, 
  we deduce that there exists a self-financing nonnegative wealth process ${\tilde X}\in\cX(\tilde x)$ such that 
\be\label{972}
\frac{{\tilde X}_t}{B_t} \geq \int_0^t \frac{|e_s|}{B_t} ds,\quad t\geq 0, \quad \P\text-a.s.
\ee
\tcb{We proceed by denoting}
$${\cE'}: = \int_0^\cdot \frac{|e_t|}{B_t}dt, 
\quad 
{\bar\cE}: = \int_0^\cdot \frac{e_t}{B_t}dt\quad {\rm and}\quad \bar C: = \int_0^\cdot \frac{c_t}{B_t}dt,
$$
\tcb{Employing the} integration by parts formula in \cite[Proposition I.4.49(a)]{JS} and a change of num\'eraire argument
, we deduce from \eqref{962} that there exists a self-financing wealth process $\bar X$ starting from $x$ such that 
\be\label{971}
 \frac{\bar X_t}{B_t}+ q\bar \cE_t\geq \bar C_t,\quad t\geq 0,\quad \P\text-a.s.
\ee
Therefore, via \eqref{972} and  \eqref{971}, we obtain 
$$0\leq \bar C_t\leq \frac{\bar X_t}{B_t}+ q\bar \cE_t\leq \frac{\bar X_t}{B_t}+ |q| \cE'_t\leq \frac{\bar X_t +|q| {\tilde X}_t}{B_t},\quad t\geq 0, \quad \P\text-a.s.$$
Here, we see that $\bar X +|q|  {\tilde X}$ is a nonnegative self-financing wealth process starting from $x + |q|\tilde x$, that is  
$$\tilde X':=\bar X +|q|  {\tilde X}\in\cX(x + |q|\tilde x).$$
 
It now follows from \cite[Theorem III.29\tr{, page 128}]{Pr} that 
\be\label{963}
{\bar\cE}_{-}\cdot (\hat YB)\quad {\rm and}\quad \bar C_{-}\cdot (\hat YB)\quad{\rm are~\P\text-local~martingales.}
\ee
Furthermore, it follows from the definition of $\cY(1)$ and from Assumption \ref{asUnique} (which, in particular, requires $\hat Y$ to be an element of $\cY(1)$), that ${\tilde X}Y$ and $(X + |q|{\tilde X})\hat Y = \tilde X'\hat Y$ are local martingales as $ \tilde 
X'\in\cX(x + |q|\tilde x)$ and $  {\tilde X} \in\cX(\tilde x)$. Therefore, there exists a localizing sequence of stopping times $\tau_n$, $n\geq 1$, for 
$${\tilde X}\hat Y ,\quad \tilde X'\hat Y,\quad {\bar\cE}_{-}\cdot (\hat YB),\quad {\rm and}\quad {\bar C}_{-}\cdot (\hat YB).$$
As a consequence, we obtain
\be\label{964}
\E\[ {\tilde X}_{\tau_n}  \hat Y_{\tau_n}\] = \tilde x,\quad n\geq 1,
\ee
and
\be\label{965}
\E\[ \tilde X'_{\tau_n} \hat Y_{\tau_n}\] = 
\E\[ (\bar X_{\tau_n} +|q|{\tilde X}_{\tau_n})\hat Y_{\tau_n}\] = x + |q|\tilde x,\quad n\geq 1.
\ee
From \eqref{964} and \eqref{965}, we deduce that
\be\label{966}
\E\[\bar X_{\tau_n} \hat Y_{\tau_n}\] = x,\quad n\geq 1.
\ee
From \eqref{971}, we obtain
\be\label{967}\(\frac {\bar X_{\tau_n}}{B_{\tau_n}} + q\bar \cE_{\tau_n} \)\(\hat Y_{\tau_n} B_{\tau_n}\) \geq \bar C_{\tau_n}\(\hat Y_{\tau_n} B_{\tau_n}\),\quad n\geq 1,\quad \P\text-a.s.
\ee
Using the integration by parts formula in \cite[Proposition I.4.49(a)]{JS}, we restate 
\eqref{967} as
\be\label{968}\bar X_{\tau_n}\hat Y_{\tau_n} + q\bar \cE_{-}\cdot \(\hat YB\)_{\tau_n} +q\int_0^{\tau_n} \hat Y_se_sds \geq \bar C_{-}\cdot \(\hat YB\)_{\tau_n} +\int_0^{\tau_n} \hat Y_sc_sds,\quad n\geq 1,\quad \P\text-a.s.
\ee
Taking \tcb{expectations} and using \eqref{966} and the martingale property of $\cE_{-}\cdot \hat Y_{\cdot \wedge \tau_n}$ and $C_{-}\cdot \hat Y_{\cdot \wedge \tau_n}$, we deduce from \eqref{968}  that 
\be\label{969}
x + q\E\[ \int_0^{\tau_n} \hat Y_se_sds\] \geq \E\[\int_0^{\tau_n} \hat Y_sc_sds\],\quad n\geq 1.
\ee
\tcb{Finally}, taking the limit as $n\to\infty$, and (i) \tcb{invoking dominated convergence} (in view of \eqref{superRep}) on the left-hand side of \eqref{969} and (ii) \tcb{applying monotone convergence} on the right-hand side of \eqref{969}, we conclude that
\be\nn
x + q\E\[ \int_0^{\infty} \hat Y_se_sds\] \geq \E\[\int_0^{\infty} \hat Y_sc_sds\],
\ee
which is \eqref{961}. This completes the proof. 

\end{proof}
\tcb{With the proofs of Lemma \ref{lemK} and Lemma \ref{lem961} in hand, we proceed to prove Theorem \ref{thmIRep}.}
\begin{proof}[Proof of Theorem \ref{thmIRep}]
Let us fix $x=-v'(1)$ and consider an arbitrary $q\in\R$, such that $\cA(x-qp(x), q)\neq \emptyset$, with $p(x)$ given in \eqref{iRep}. Next, let us fix an arbitrary $c\in\cA(x-qp(x),q)$. Using Lemma \ref{lem961}, we deduce that 
\be\label{7293}
 \E\[ \int_0^\infty  \hat Y_t  c_t  dt\]\leq x - qp(x) +  q\E\[ \int_0^\infty  \hat Y_t  e_t  dt\] = x.
\ee
Further, \cite[Theorem 3.2]{MostovyiNec} enables us to conclude that 
\be\label{7292}
u(x) - x  = v(1).
\ee
\tcb{The conjugacy between $U$ and $V$, in conjunction with} \eqref{7293} and \eqref{7292}, allow us to obtain
\be\nn\ba
\E\[ \int_0^\infty e^{-\rho t}U\(c_t \)dt\]&\leq \E\[ \int_0^\infty e^{-\rho t}V (\hat Y_t e^{\rho t}  )dt\] +  \E\[ \int_0^\infty  \hat Y_t  c_t  dt\] \\
&= v(1)+  \E\[ \int_0^\infty  \hat Y_t   c_t  dt\] \\
&= u(x) - x +  \E\[ \int_0^\infty  \hat Y_t   c_t   dt\] \\
&\leq u(x)   -x+x  = u(x).
\ea\ee
Since $q$ and $c$ are arbitrary, we deduce from Definition \ref{defInd} that $p(x)$ as given in \eqref{iRep} is an indifference price for the stream of income $e$. 
The scaling argument (as in Remark \ref{remPindepx}) implies that  \eqref{iRep} holds for every $x>0$.

To prove the uniqueness of the indifference price for $e$ given by \eqref{iRep}, we consider $\pi < p(x)$. \tcb{Suppose that there is a} consumption stream $\tilde c\geq  |e|$, $(dt\times\P)\text-$a.e., satisfying  Assumption \ref{asUnique}. Then, it follows from \eqref{superRep} and \cite[Lemma 4.2]{MostovyiNec} that $\tilde c\in\bigcup\limits_{x\geq 0}\cA(x)$. \tcb{We proceed to denote}
\be\label{defTildex}
\tilde x : = \E\[\int_0^\infty \tilde c_t \hat Y_tdt\],
\ee
 which is finite by Assumption \ref{asUnique}, \tcb{and} which additionally asserts that $\tilde x = \sup\limits_{Y\in\cY(1)}\E\[\int_0^\infty \tilde c_t \hat Y_tdt\]$. Therefore, via \cite[Lemma 4.2]{MostovyiNec} (and localization), we deduce that $\tilde c\in\cA(\tilde x)$. As a consequence, $$\tilde c + e\in \cA(\tilde x, 1).$$ 
 \medskip
\noindent Let $q_n$, $n\in \N$, be a sequence of strictly positive numbers decreasing to $0$ and such that $q_n\leq \frac {x}{\tilde x+ |\pi|}$, $n\in\N$, where $x=-v
'(1)$, which is the same strictly positive constant as in the previous paragraph. With $\hat c(x - q_n(\tilde x + \pi))$ denoting the optimizers to \eqref{primalProblem} at $x - q_n(\tilde x + \pi)>0$, $n\in\N$, we set 
\be\label{defcn}
c^n: = \hat c(x - q_n(\tilde x + \pi)) + q_n (\tilde c + e),\quad n\in\N. 
\ee
In view of Assumption \ref{asUnique}, we have that $c^n\in\cA(x - q_n\pi, q_n)$, $n\in\N.$
\tcb{Employing the concavity of $U$, we obtain}
\be\nn\ba
\E\[\int_0^\infty e^{-\rho t}U(c^n_t )dt \] 
&\geq \E\[\int_0^\infty e^{-\rho t}U(\hat c_t(x - q_n(\tilde x + \pi)))dt \] +q_n\E\[\int_0^\infty  e^{-\rho t} (\tilde c_t + e_t) U'(c^n_t  )dt \]\\
&=  u(x - q_n(\tilde x + \pi))  +q_n\E\[\int_0^\infty  e^{-\rho t} (\tilde c_t + e_t)  U'(c^n_t  )dt \],\quad n\in\N.
\ea\ee
Consequently, for $y = u'(x) = 1$, we get
\be\label{7281}\ba
\liminf\limits_{n\to \infty}\frac{\E\[\int_0^\infty e^{-\rho t}U(c^n_t )dt \] - u(x)}{q_n}&\geq 
-y(\tilde x + \pi) + \liminf\limits_{n\to \infty}\E\[\int_0^\infty  e^{-\rho t} (\tilde c_t + e_t) U'(c^n_t  )dt \]\\
&= 
- (\tilde x + \pi) + \liminf\limits_{n\to \infty}\E\[\int_0^\infty  e^{-\rho t} (\tilde c_t + e_t)  U'(c^n_t  )dt \].
\ea\ee
In view of the power utility structure in the objective of \eqref{primalProblem}, one can see that the optimizers to \eqref{primalProblem} associated with different initial wealths satisfy $\hat c(x) = x\hat c(1)$, $(dt\times \P)\text-a.e.$, $x>0$. Therefore, using the $(dt\times \P)\text-$a.e. non-negativity of $ (\tilde c_t(\omega) + e_t(\omega)) U'(c^n_t(\omega) + q_n e_t(\omega))$ \tcb{for all $n\in\N$, and employing} Fatou's lemma, we can further bound the right-hand side of \eqref{7281} from below by 
\be\label{7294}\ba
&\liminf\limits_{n\to \infty}\E\[\int_0^\infty  e^{-\rho t} (\tilde c_t + e_t)  U'(c^n_t )dt \]\geq 
 \E\[\int_0^\infty  e^{-\rho t} (\tilde c_t + e_t) U'(\hat c_t(x))dt \].
 \ea\ee
 Now, using \cite[Theorem 3.2]{MostovyiNec}, we deduce that for $y = u'(x) = 1$, 
 \be\label{7295}\ba
 &\E\[\int_0^\infty  e^{-\rho t} (\tilde c_t + e_t) U'(\hat c_t(x))dt \]=\E\[\int_0^\infty    (\tilde c_t + e_t) y\hat Y_t  dt \] =\E\[\int_0^\infty    (\tilde c_t + e_t) \hat Y_t  dt \]. 
 \ea
 \ee
 Furthermore, from \eqref{iRep} and \eqref{defTildex}, we obtain 
 \be\label{7296}
 \E\[\int_0^\infty    (\tilde c_t + e_t)  \hat Y_t dt \] =   \tilde x + p(x),
 \ee
 \tcb{with $p(x)$ given in} \eqref{iRep}. 
Combining  \eqref{7281}, \eqref{7294}, \eqref{7295}, and \eqref{7296}, we deduce that 
\be\label{7297}\ba
\liminf\limits_{n\to \infty}\frac{\E\[\int_0^\infty e^{-\rho t}U(c^n_t )dt \] - u(x)}{q_n}\geq  p(x) - \pi >0, 
\ea\ee
where the strict positivity follows from the assumption that $\pi < p(x)$.  As  $c^n\in\cA(x - q_n\pi, q_n)$, $n\in\N$,
via Definition \ref{defInd}, \eqref{7297} implies that $\pi$ is not an indifference price for $e$. By considering an income stream $\tilde e = -e$, and applying the argument above, we deduce that every $\pi>p(x)$ is not an indifference price for $e$.
\end{proof}
\section{The case where the agent has no prior knowledge about when the jumps occur, nor of their magnitudes}\label{secKnowsNothing}
\tcb{This section considers the investor who has no information about the jumps}. We proceed to recall the characterizations of the value function and the optimal consumption in \cite[Theorem 1]{Philip1}. With 
\be\label{defg}
g(q) : = \int \{ 1 + q(e^x - 1)\}^{1-R}p(x)dx,\quad q\in[0,1],
\ee
and
\be\label{defg1}
g_1(q) := r + q(\mu - r) -\frac 12 \sigma^2 Rq^2 + \frac{\lambda\{g(q) - 1\}}{1-R},\quad q\in[0,1],
\ee
we denote a maximizer to $g_1$ over $[0,1]$ by $\bar q_1$, that is 
\be\label{defq1}
\bar q_1 \in \argmax\limits_{q\in[0,1]}g_1(q),
\ee
and suppose that  {$\bar q_1\in(0,1)$}.
\begin{Remark}
The condition $\bar q_1\in(0,1)$ is needed to ensure that the dual minimizer is a local martingale. \tcb{This is in the same spirit as \cite[Theorem 3.2]{KallsenLevy}}, where a very similar assumption is imposed, ensuring that the candidate solution obtained from the first-order conditions is admissible.  
\end{Remark}
We shall also denote
\be\label{defA1}
A_1 := \(\frac{R}{\rho + (R-1)g_1(\bar q)}\)^R.
\ee
With these preliminaries, it is shown in \cite[Theorem 1]{Philip1} that, for $q=0$ and $x>0$ being fixed, if $\rho +(R-1) g_1(\bar q_1)>0$, the optimal solution to \eqref{primalProblem} has controls 
$$\bar c_t = A_1^{-\frac 1R}\bar w_t,\quad \bar\theta_t = \bar q_1\bar w_{t-},$$
where $\bar q_1$ is given by \eqref{defq1}, \tcb{$A_1$ is given by \eqref{defA1}}, and the wealth process $\bar w$ is given via 
\be\label{defw0}
 \bar w_t =\bar w_0+\int_0^t(r\bar w_{s} - \bar c_s)ds + \int_0^t\bar \theta_s\( (\mu - r)ds + \sigma  dW_s + \int(e^x - 1)n(ds, dx)\), \quad t> 0, \quad\bar  w_0 >0.
\ee
Further, the value function $u$ defined in \eqref{primalProblem} is given by 
\be\label{8281}
u(x) = A_1 U(x),\quad x>0,
\ee
with $U$ \tcb{given in} \eqref{defU}.
Now, \eqref{8281} implies that $$u(x)>-\infty, \quad x>0.$$In view of the power structure of the utility $U$ represented by  \eqref{defU} and since for every $\bar w_0>0$, we have $$\(e^{-\rho t}U'(\bar c_t)\)_{t\geq 0}\in\bigcup\limits_{y>0}\cY(y),$$ one can show that 
$$v(y) 
<\infty,\quad y>0.$$
Therefore, \eqref{finValue} holds. 
With
 $$\bar w_0 = \frac{R}{\rho + (R-1)g_1(\bar q)},$$
 and for $\bar w$ given by \eqref{defw0}, 
 one can apply \cite[Theorem 3.2]{MostovyiNec} to show that the optimal solution to \eqref{dualProblem} at $y=1$ is given by 
\be\label{hatY}
\hat Y_t = e^{-\rho t}U'(\bar c_t) = A_1 e^{-\rho t}\bar w^{-R}_t,\quad t\geq 0,\quad y>0.
\ee
This $\hat Y$ is the dual-optimal state price density. 
Theorem \ref{thmIRep} asserts that, for every income stream $e$ satisfying \eqref{superRep}, its indifference value does not depend on $x>0$ and is given by 
\be\label{iRep2}
p(x) = A_1 \E\[ \int_0^\infty e^{-\rho t}\bar w^{-R}_t e_t dt\],\quad x>0.
\ee
Invoking It\^o's lemma, one can show that $\hat Y$ satisfies the following stochastic differential equation
\be\nn\ba
{d\hat Y_t} = {\hat Y_{t-}}&\left(-\rho dt + R\{-r - \bar q_1(\mu -r ) + A_1^{-\frac 1R} + \frac 12(R+1) \sigma^2 \bar q_1^2\}dt\right.\\
&\qquad\left. -R\bar q_1 \sigma dW_t + \int \{(1  + \bar q_1(e^x - 1))^{-R}  -1\}n(dt, dx)\right),\quad t>0, \quad \hat Y_0 =1.
\ea\ee
Thus, in view of \cite[Lemma A.1]{KallsenPII}, $\hat Y B$ is a true martingale. The sufficient conditions given by Remark \ref{remYmart} thus apply.

\section{The case where the agent knows precisely the time of the next jump, but not the magnitude}\label{secKnowsTimes}
In this setting, the (first type of) insider knows the time $T_1$ of the first jump of the stock, but not the magnitude, and immediately after the jump $T_1$, he knows the time of the next jump, but not the magnitude. This pattern continues; immediately after the \tcb{jump $ T_n$}, the insider of this kind knows the time of the \tcb{jump $T_{n+1}$} but not its magnitude. \tcb{We consider} the value function
\be\label{8311}
u^{a}(t, w;T_1) = \sup\limits_{c\in\cA(w)}\E'\[ e^{-\rho(t-s)}U(c_s) ds|T_1, w_t = w\],\quad t\in [0,T),
\ee
where $\E'$ is the expectation for the insider of the first kind (here, $\P'$ is the associate probability measure on a filtered probability space, \tcb{where the filtration supports the knowledge of jump $T_{i+1}$ immediately after the jump $T_i$)}. 
Following \cite{Philip1} and using the scaling argument, one can show that the value function up to $T_1$ has the form
$$u^{a}(t, w;T_1) = f(T_1 - t)U(w),\quad t\in[0,T_1),$$
for some function $f$ to be yet characterized. Since $T_1\sim {\cE}xp(\lambda)$, we can represent $u^{a}(0,w)$ as 
$$u^{a}(0,  w) = \int \lambda e^{-\lambda s} f(s) ds U(w)=:A_2U(w),$$
where the constant $A_2$ is yet to be determined. 
At the jump time $T_1$, if the wealth of the insider of the first kind will jump from 
$$w_{T_1-}\quad to \quad w_{T_1-}\(1 + a (e^\xi - 1)\),$$
where $a$ is the proportion of the investor's wealth invested in the risky asset at $T_1-$ and $\xi\sim N(m, v)$,  \tcb{the utility of his wealth} will get scaled by a factor $\(1 + a (e^\xi - 1)\)^{1-R}$. Thus we choose $$a^*\in argmax_{a\in[0,1]} \frac{g(a)}{1- R}.$$
\tcb{Similar to the case} of the investor without insider information, \tcb{in order to ensure} that the candidate dual minimizer $\hat Y^b$ is such that $e^{rt}\hat Y^b_t$, $t\geq 0$, is a local martingale  and not a supermartingale, we suppose that there exists a unique 
\be\label{a*}
a^*\in argmax_{a\in[0,1]} \frac{g(a)}{1- R}\bigcap (0,1).
\ee
The authors of \cite{Philip1} show that with
$$\gamma_M: = \frac{\rho + (R-1)\(r + \frac{(\mu -r)^2}{2R\sigma^2}\)}R,$$ 
\be\label{deff1}
f(t) = \(\frac{1 - e^{-\gamma_M t}}{\gamma_M} +e^{-\gamma_Mt}f(0)^{\frac 1R} \)^R,\quad t\geq 0,\ee
where $f(0)$ is the unique fixed point of 
\be\label{deff0}x\to \varphi(x): = g(a^*)\int_0^\infty \lambda e^{-\lambda s}\( \frac {1 -  e^{-\gamma_Ms}}{\gamma_M} + e^{-\gamma_M s}x^{\frac 1R}\)^Rds,
\ee
which, by \cite[Theorem 2]{Philip1}, exists if $R>1$ or if $R<1$ and $\frac{\lambda g(a^*)}{\lambda + R\gamma_M}< 1$. Denoting $T_0:=0$, between the jumps, it is optimal to have 
$$\theta^{a}_t = \frac{\mu - r}{\sigma^2 R}w^{a}_{t-},\quad   c^{a}_t = f(T_i-t)^{-\frac 1R}w^{a}_t, \quad t\in (T_{i-1},T_i),\quad i\in\N,$$ 
The constant $A_2$ is given by 
$$A_2 = \int \lambda e^{-\lambda s} f(s) ds.$$
Note that the problem \eqref{primalProblem} is ill-posed if $R\in(0,1)$ and 
$$\frac{\lambda g(a^*)}{\lambda + R\gamma_M}\geq 1,$$
which is consistent with \cite[Proposition 1.3]{RogersBook}. Otherwise, (that is, if $R>1$ or if $R\in(0,1)$ and $\frac{\lambda g(a^*)}{\lambda + R\gamma_M}< 1$), on $[0,T_1]$, \tcb{using} the convention that $w^{a}_{0-} = w^{a}_0$, 
 the optimal controls are given by 
$$\theta^{a}_t = \frac{\mu - r}{\sigma^2 R} w^{a}_{t-},\quad t\in[0, T_1),\quad   \theta^{a}_{T_1} = a^*    w^{a}_{T_1-}, \quad   c^{a}_t = f(T_1-t)^{-\frac 1R}    w^{a}_{t-}, \quad t\in[0, T_1],$$
\tcb{where the time $T_1$ is the renewal time}. The solution after $T_1$ conforms with the solution stated above for $[0,T_1]$ throughout the interval $[T_1,T_2]$ and recursively thereafter. That is, we have, respectively, the following optimal strategy and consumption.
$$   \theta^{a}_t =    w^{a}_{t-}\sum\limits_{i = 1}^{\infty}\(\frac{\mu - r}{\sigma^2 R} 1_{(T_{i-1}, T_i)}(t)+ a^* 1_{[T_i]}(t)\),\quad  c^{a}_t =   w^{a}_{t-}\sum\limits_{i = 1}^{\infty} f(T_i - t)^{-\frac 1R}1_{[T_{i-1}, T_i)}(t),\quad t\geq 0,$$
\be\label{defw1}
   w^{a}_t =   w^{a}_0  + \int_0^t(r  w^{a}_s -   c^{a}_s)ds + \int_0^t  \theta^{a}_s\( (\mu - r)ds + \sigma  dW_s + \int(e^x - 1)n(ds, dx)\), \quad t> 0,\quad   w^{a}_0>0.
\ee
If we consider the initial wealth  {$w^{a}_0 = \(f(T_1)\)^{\frac 1 R}$}, and, e.g., by 
invoking \cite[Theorem 3.2]{MostovyiNec} and \cite[Appendix A.2]{Philip1}, one may show that 
\be\label{dualOptI1}\hat Y^{a}_t = \sum\limits_{j = 1}^{\infty} f(T_j - t)\(w^{a}_t\)^{-R}e^{-\rho t}1_{[T_{j-1}, T_j)}(t),\quad t\geq 0,\ee
is the optimizer to \eqref{dualProblem} at $y=1$ for the insider of the first kind. That is, $\hat Y^{a}$ is the dual-optimal  state price density, which satisfies
\be\nn\hat Y^{a}_t = e^{-\rho t}U'(c^{a}_t),\quad t\geq 0,\quad (dt\times \P')\text-a.e.,\ee
and where $c^{a}$ is the optimizer to \eqref{primalProblem} for the insider of the first kind at $w^{a}_0 = \(f(T_1)\)^{\frac 1 R}$.

 Therefore, to compute the indifference value of an income stream $e$ for an insider of this kind, we need to invoke the representation formula given by \eqref{iRep} in Theorem \ref{thmIRep}, with the dual optimizer corresponding to $y=1$ being given by \eqref{dualOptI1}.
 
%

\section{The case where the agent knows nothing about the time of the next jump but sees the signal
$\eta = \xi + \varepsilon $} \label{secKnowsSize}
The insider of the second kind has some advance information about the size of the first jump of the Poisson random measure $n$ that comes at time $T_1$, \tcb{ but knows nothing about the time of this jump}.  This insider receives a signal 
$$\eta = \xi + \e,$$
where $\xi\sim N(0, v_\e)$ is independent of everything else. One can compute the conditional distribution of $\xi$ given $\eta$ as
\be\label{921}
(\xi|\eta)\sim N\(\frac{v\eta + v_\e m}{v+ v_\e},\frac {vv_\e}{v + v_\e}\).
\ee
At time $T_1$, immediately after the jump, the investor receives the next jump similar to the one above. 
With $E''$ denoting the expectation for this kind of insider, we consider 
$$u^{b}(w;\eta):=\sup\limits_{c\in\cA(w)}\E''\[ \int_0^\infty e^{-\rho t}U(c_t)dt|\eta, w_0 = w\].$$
Using the scaling argument, one can show that 
$$u^{b}(w;\eta) = h(\eta) U(w),$$
for a function $h$ \tcb{yet to be} characterized. If at $T_1$, $\xi$ is the size of the jump in the log price and $q$ is the fraction of wealth invested in the risky asset, the wealth of the insider of the second kind changes by the multiplicative factor $1 + q(e^\xi - 1)$, so that the indirect utility of the insider changes at $T_1$ from $h(\eta)U(w)$ to 
$$\(1 + q(e^\xi - 1)\)^{1-R}U(w)h(\eta'),$$
where $\eta'$ is the new signal about the size of the jump at time $T_2$. With $\P_0$ denoting the $N(m, v + v_\e)$ distribution and $\P$ denoting the distribution in \eqref{921}, the expected value of the indirect utility at $T_{1+}$ is 
$$U(w) \int \(1 + q(e^x - 1)\)^{1-R} \P(dx|\eta)\int h(y)\P_0(dy).$$
Denoting 
$$A_3: = \int h(y) \P_0(dy),$$
the expected value in the indirect utility can be written as 
$$A_3U(w) \int \(1 + q(e^x - 1)\)^{1-R} \P(dx|\eta).$$ 
Thus, determining $A_3$ and $h$ is key for finding the dual-optimal state price density for \tcb{this insider of the second kind}. With $\P(dx|m', v')$ denoting \tcb{the} $N(m', v')$ distribution, we introduce
\be\label{defphis}\ba
\varphi_1(q) &: = r + q (\mu -r) -\frac 12 \sigma^2 q^2 R - \frac{\rho + \lambda}{1 -R},\\
\varphi_2(q;m', v') &: = \int U(1 + q(e^x - 1))\P(dx|m', v'),\quad q\in[0,1].
\ea
\ee
If $R>1$ and $\frac{\mu - r}{\sigma^2R}\in(0,1)$,  it is shown in \cite[Theorem 3]{Philip1} that the pair $(h, A_3)$ is given by a maximal solution to the system
\be\label{9191}\ba
\frac{h(\eta)^{1 - \frac 1R}}{1 - \frac 1R} &= \sup\limits_{0\leq q\leq 1}\( h(\eta) \varphi_1(q) + \lambda A_3 \varphi_2(q; m'(\eta), v')\),
\\
A_3 &= \int h(y) \P_0(dy),
\ea\ee
subject to $A_3\leq A_1$. Then, the value function $u^{b}(w;\eta) = h(\eta)U(w)$, and the optimal $q$ denoted by $\bar q(\eta)$, maximizes the function 
\be\label{922}
q\to h(\eta) \psi_1(q) + \lambda A_3 \varphi_2\(q;\frac{v\eta + v_\e m}{v+ v_\e},\frac {vv_\e}{v + v_\e}\),\quad q\in[0,1].
\ee
With $w_{0-} := w_{0}$, the optimal controls are given by 
\be\nn
c^{b}_t = (h(\eta_t))^{-\frac 1R} w^{b}_{t\tr -},\quad \theta^{b}_t = \bar q(\eta_t)w^{b}_{t\tr -},\quad t\geq 0,
\ee
where $\eta_t$ is the signal known at time $t$ about the next jump after $t$. The optimal wealth process is given by 
\be\label{defw2}
 w^{b}_t =   w^{b}_0  + \int_0^t(r  w^{b}_s - c^{b}_s)ds + \int_0^t \theta^{b}_s\( (\mu - r)ds + \sigma  dW_s + \int(e^x - 1)n(ds, dx)\), \quad t> 0,\quad w^{b}_0>0.
\ee
Now, using \cite[Theorem 3]{Philip1} and \cite[Theorem 3.2]{MostovyiNec}, with \tr{$w^{b}_0 = (h(\eta_0))^{\frac 1R}$}
, one can show that the dual optimal state price density is given by 
$$\hat Y^{b}_t = e^{-\rho t} h(\eta_t) \(w^{b}_t\)^{-R},\quad t\geq 0,$$
where, again, $\eta_t$ is the signal known at time $t$ about the next jump after $t$.

\section{Discussion}\label{secDiscussion}
We begin by \tcb{considering the case of the investor who has no information about the jumps}. If the return of the risky asset is given by 
$$R = \bar \mu t + \bar \sigma W,$$
and the riskless asset gives a constant return $rt$ as in the previous sections, 
as in the Black-Scholes-Merton case, one can find the value functions in \eqref{primalProblem} and \eqref{dualProblem}  and the optimizers to these problems. Here, we have that
$$u^M(w) = A_MU(w),$$
where $$A_M = \frac{R^R}{\(\rho + (R-1)\(r +\frac{(\bar \mu-r)^2}{2\bar \sigma^2 R}\)\)^R}.$$
Following \cite[Chapter 1.2]{RogersBook}, with $$\kappa := \frac{\mu -r}{\sigma}\quad{\rm and}\quad \gamma_M : = A_M^{-\frac 1R},$$ we have the following optimal wealth process $$w^M_t = w^M_0\exp\( R^{-1}\kappa\cdot W_t + (r + \frac 12 R^{-2}|\kappa|^2(2R - 1) -\gamma_M)t\),\quad t\geq 0.$$
The optimal consumption rate is $c^M:=\gamma_Mw^M$, and therefore, for $w^M_0 =\frac 1{\gamma_M}$, the dual-optimal state price density is 
$$\hat Y^M_t = e^{-\rho t}U'(c^M_t) = e^{-\rho t}\(w^M_t\)^{-R},\quad t\geq 0,$$
\tcb{and the value function in \eqref{primalProblem} is $u^M(w) = A_MU(w)$.} Observe that \eqref{finValue} holds.
\tcb{We now compare the optimal state-price densities in previous sections}:
$$\hat Y_t = e^{-\rho t}A_1\bar w^{-R}_t,\quad u(w) = A_1U(w),$$
$$\hat Y^{a}_t= e^{-\rho t}\sum\limits_{i\geq 1}f(T_i - t)  \(w^{a}_t\)^{-R}_t1_{[T_{i-1}, T_i)}(t),\quad u^{a}(w) = A_2U(w),$$
$$\hat Y^{b}= e^{-\rho t}h(\eta_t) \(w^{b}_t\)^{-R},\quad u^{b}(w) = A_3U(w).$$

\begin{Example}

For a deterministic income stream, for example,  $e _t = 1$, $t\geq 0$, the indifference value does not change under extra information. \tcb{Its indifference value is $\frac 1r$ under both the choice of parameters in Section \ref{secKnowsTimes} and under the choice of parameters in Section \ref{secKnowsSize}}.
\end{Example}

\begin{Example}
Let us consider a deterministic yet randomly terminating income stream $e_t = \exp(rt)1_{[0,T_1)(t)}$, $t\geq 0$. 
It is easiest to compute the indifference value of this stream for the insider of the first kind. Here, the value of the stream given $T_1$ is 
$$  \E'\[\int_0^\infty e_t \hat Y^a_t dt|T_1\] =  \E'\[\int_0^{T_1} e^{rt} \hat Y^a_t dt|T_1\] = T_1,$$
and therefore, on average, the value of such a stream is $\frac 1{\lambda}$.

Next, for the \tcb{investor who has no information about the jumps}, with $A_1$ given in \eqref{defA1} and $\bar q_1$ the unique maximizer in \eqref{defq1}, where we suppose that $\bar q_1\in(0,1)$, set
\be\label{defAlpha}
\alpha : = 
r-\rho + R\left\{-r - \bar q_1(\mu -r ) + A_1^{-\frac 1R} + \frac 12(R+1) \sigma^2 \bar q_1^2\right\}.
\ee
The value of the stream is given by 
\be\nn\ba 
\E\[\int_0^\infty e_t \hat Y_t dt \] &=   \E\[\int_0^\infty e^{rt} \hat Y_t1_{[0,T_1)(t)} dt\] \\
&= \E\[\int_0^\infty \exp\(\alpha t - R\bar q_1 \sigma W_t - \frac 12\(R \bar q_1 \sigma\)^2 t \)1_{[0,T_1)(t)} dt\] \\
&= \int_0^\infty \E\[\exp\(\alpha t - R \bar q_1 \sigma W_t- \frac 12\(R \bar q_1 \sigma\)^2 t \)1_{[0,T_1)(t)}\] dt \\
&= \int_0^\infty \E\[\exp\(\alpha t - R \bar q_1 \sigma W_t- \frac 12\(R \bar q_1 \sigma\)^2 t \)\]\P[T_1>t] dt \\
&= \int_0^\infty \E\[\exp\(\alpha t - R \bar q_1 \sigma W_t- \frac 12\(R \bar q_1 \sigma\)^2 t \)\]\exp(-\lambda t) dt \\
&= \int_0^\infty \E\[\exp\( - R \bar q_1 \sigma W_t- \frac 12\(R \bar q_1 \sigma\)^2 t \)\]\exp\(\{\alpha -\lambda\} t\) dt \\
& = \int_0^\infty \exp\(\{\alpha -\lambda\} t\) dt = \frac{1}{\lambda-\alpha},
\ea\ee
\tcb{which,  if $\alpha>0$,} is greater than $\frac 1\lambda$ (here, $\frac 1\lambda$ is the average value of the insider of the first kind) and is smaller than $\frac 1\lambda$ if $\alpha<0$.

For the insider of the second kind, similar to the computations for \tcb{the investor who has no information about the jumps}, with $\eta_0$ being the initial signal about the first jump (at $T_1$) given to this insider, $\bar q(\eta_0)$ being the $\omega$-by-$\omega$ maximizer to \eqref{922} associated with $\eta_0$, and $h$ being given through a pair $(h, A_3)$ as a maximal solution to \eqref{922}, set
\be\label{defBeta}\beta(\eta_0): = r-\rho + R\left\{ -r - \bar q(\eta_0) (\mu -r) + h(\eta_0)^{-\frac 1R} +\frac 12(R+1)\sigma^2\(\bar q(\eta_0)\)^2\right\}.
\ee
\tcb{We compute}
\be\nn\ba
\E''\[\int_0^\infty e_t \hat Y^{b}_tdt |\eta_0\] &= \E''\[\int_0^\infty e^{rt}\hat Y^{b}_t1_{[0,T_1)}(t)dt |\eta_0\]\\
&= \int_0^\infty \E''\[\exp\(-\bar q(\eta_0) R \sigma W_t -\frac 12 (\bar q(\eta_0) R \sigma)^2t \)|\eta_0\]\exp\(\beta(\eta_0) t  \)\P\[T_1>t\]dt \\
&= \int_0^\infty \exp\(\{\beta(\eta_0) - \lambda\} t  \)dt  = \frac 1{\lambda - \beta(\eta_0)}.
\ea\ee
\end{Example}
\begin{Example}
Consider an income stream $e_t = e^{(r-1)t} 1_{[T_1,\infty)}(t)\Psi(\eta_0)$, $t\geq 0$, where $\Psi $ is a bounded function. That is, the income stream is set at $T_1$; it depends on both $T_1$ and the signal that the investor of the second kind receives at time $0$ about the first jump of the risky asset, $\eta_0$. \tcb{After $T_1$, the income stream is received indefinitely by a predetermined formula at $T_1$.}

For the \tcb{investor who has no information about the jumps}, with the clock $\kappa_t := 1 - e^{-t}$, $t\geq 0$, we obtain
\be\nn\ba
\E\[\int_0^\infty e_t \hat Y_t dt\] &= \E\[\int_{T_1}^\infty \(e^{rt}\hat Y_t\) \Psi(\eta_0) e^{-t}dt\]\\
&= \E\[\int_{T_1}^\infty \(e^{rt}\hat Y_t\) \Psi(\eta_0) d\kappa_t\]\\
&= \E\[\E\[\int_{T_1}^\infty \(e^{rt}\hat Y_t\) \Psi(\eta_0) d\kappa_t|\cF_{T_1}\]\],
\ea\ee
which, using localization and integration by parts, we can rewrite as
\be\nn\ba
\E\[ \Psi(\eta_0)  \(e^{rT_1}\hat Y_{T_1}\) e^{-T_1}\].
\ea\ee
Recalling that $\eta_0 = \xi +\e$, where $\xi\sim N(m, v)$ and $\e\sim N(0, v_\e)$ are independent (from each other and the Brownian motion $W$), we can rewrite the latter expression as 
\be\label{9201}\ba
\E\[ \Psi(\xi + \e) \(e^{r{T_1-}}\hat Y_{{T_1-}}\) \(1 + \bar q_1\(e^{\xi} - 1\)\)^{-R}e^{-T_1}\].
\ea\ee
where $\bar q_1$ is the unique maximizer in \eqref{defq1}, where we suppose that $\bar q_1\in(0,1)$. Finally, with $\alpha$ defined in \eqref{defAlpha}, and assuming that $\alpha<1+\lambda$, we restate the value of the income stream given by \eqref{9201} below
 \be\nn\ba
&\int_0^\infty e^{-(1 - \alpha)t} \lambda e^{-\lambda t}dt \int_{\R^2} \frac 1{2\pi\sqrt{vv_e}} \Psi(x_1 + x_2)  \(1 + \bar q_1\(e^{x_1} - 1\)\)^{-R} e^{-\frac{(x_1 - m)^2}{2v} - \frac{x_2^2}{2v_e}}dx_1dx_2\\
=&
\frac{\lambda}{\(\lambda +1-\alpha\)} 
\frac 1{2\pi\sqrt{vv_e}}
\int_{\R^2}  \Psi(x_1 + x_2)  \(1 + \bar q_1\(e^{x_1} - 1\)\)^{-R} e^{-\frac{(x_1 - m)^2}{2v} - \frac{x_2^2}{2v_e}}dx_1dx_2.
\ea\ee

Now, let us consider an insider of the second kind who knows the time of every jump. \tcb{Similar to the computations above (particularly  \eqref{9201})}, with 
$ {\pi_M := \frac{\mu - r}{\sigma^2 R}}$, \tcb{with} $f$ being given through \eqref{deff1} and \eqref{deff0}, and 
$$\phi(T_1) :=r-\rho  +R\(-r - \pi_M(\mu - r)+1_{\{T_1>0\}}\frac{ 1}{T_1}\int_0^{T_1}f(T_1-t)^{-\frac 1R}dt+ \frac 12 \pi^2_M\sigma^2(1 + R)\) ,$$
we obtain the following value of the income stream for the insider of the first kind
\be\nn\ba
&\E'\[ \Psi(\xi + \e)\(e^{r{T_1-}}\hat Y^a_{{T_1-}}\)e^{-T_1}\(1 + a^*\(e^{\xi} - 1\)\)^{-R}|T_1\]\\
=&e^{-T_1}\E'\[\(e^{r{T_1-}}\hat Y^a_{{T_1-}}\)\Psi(\xi + \e) \(1 + a^*\(e^{\xi} - 1\)\)^{-R}|T_1\]\\
=&e^{-T_1}
\E'\[\(e^{r{T_1-}}\hat Y^a_{{T_1-}}\)|T_1\]
\frac 1{2\pi\sqrt{vv_e}}
\int_{\R^2}  \Psi(x_1 + x_2)  \(1 + a^*\(e^{x_1} - 1\)\)^{-R} e^{-\frac{(x_1 - m)^2}{2v} - \frac{x_2^2}{2v_e}}dx_1dx_2\\
=&e^{-T_1}\frac{f(0)}{f(T_1-)}e^{\phi(T_1)T_1 }
\frac 1{2\pi\sqrt{vv_e}}
\int_{\R^2}  \Psi(x_1 + x_2)  \(1 + a^*\(e^{x_1} - 1\)\)^{-R} e^{-\frac{(x_1 - m)^2}{2v} - \frac{x_2^2}{2v_e}}dx_1dx_2.
\ea\ee

Finally, consider the insider of the second kind, who receives the signal about the jump at $T_1$, $\eta_0 = \xi + \e$, where $\xi$ is the actual size of the jump of the cumulative return of the risky asset at $T_1$, \tcb{and where $\e$ is an independent $N(0, v_\e)$ random variable assumed to be independent of everything else.}
 \tcb{Proceeding similarly} to the computations for the \tcb{investor who has no information about the jumps}, we obtain the following value of the income stream
 \be\nn\ba
\E''\[ \Psi(\eta_0)  \(e^{rT_1}\hat Y_{T_1}\) e^{-T_1}|\eta_0\] &=  \Psi(\eta_0) \E''\[ \(e^{rT_1}\hat Y_{T_1}\) e^{-T_1}|\eta_0\] \\
 &=  \Psi(\xi + \e) \E''\[ \(e^{rT_1}\hat Y_{T_1-}\) e^{-T_1}\(1 + \bar q(\eta_0)(e^\xi - 1) \)^{-R}|\eta_0\]. \\
\ea\ee
 \tcb{Recall that} $\beta(\eta_0)$ is defined in \eqref{defBeta}. Assuming that $\beta(\eta_0)<1+ \lambda$, and observing that the conditional distribution of $\xi$ given $\eta_0$ is $N\(\frac {v\eta_0 + v_\e m}{v+v_\e}, \frac {vv_\e}{v+v_\e}\)$, 
 we can simplify the latter expression as 
  \be\nn\ba
  &\Psi(\eta_0) \E''\[ e^{(\beta(\eta_0)-1)T_1}e^{\(-\bar q(\eta_0) R\sigma W_{T_1} -\tfrac 12  (\bar q(\eta_0) R\sigma)^2T_1\)}\(1 + \bar q(\eta_0)(e^\xi - 1) \)^{-R}|\eta_0\]\\
  =&\Psi(\eta_0) \int_0^\infty \lambda e^{-\lambda t}e^{(\beta(\eta_0)-1)t}\E''\[e^{\(-\bar q(\eta_0) R\sigma W_{t} -\tfrac 12  (\bar q(\eta_0) R\sigma)^2t\)}\(1 + \bar q(\eta_0)(e^\xi - 1) \)^{-R}|\eta_0\]dt\\
  =&\Psi(\eta_0) \frac \lambda{\lambda + 1 - \beta(\eta_0)}\frac {\sqrt{vv_\e}}{\sqrt{2\pi (v + v_\e)}}\int_{\R}e^{-\frac{\(x - \frac{v\eta_0 + v_\e m}{v+ v_\e}\)^2(v + v_\e)}{2vv_\e}}\(1 + \bar q(\eta_0)(e^{x}- 1) \)^{-R} dx.
\ea\ee
\end{Example}

\bibliographystyle{alpha}\bibliography{finance}
\end{document}